\theoremstyle{definition}
\newtheorem{theorem}{Theorem}[section]
\newtheorem{claim}{Claim}[theorem]
\theoremstyle{remark}
\newcommand{\argmax}{\mathop{\mathrm{argmax}}}
    \renewcommand\@make@capt@title[2]{%
     \@ifx@empty\float@link{\@firstofone}{\expandafter\href\expandafter{\float@link}}%
      {\textsc{#1}}\@caption@fignum@sep#2\quad}%
\begin{document}


\title{Shallow-Depth Variational Quantum Hypothesis Testing}

\author{Mahadevan Subramanian}
\affiliation{Department of Physics, Indian Institute of Technology Bombay, Powai, Mumbai 400076, India}
\author{Sai Vinjanampathy}
\email[]{sai@phy.iitb.ac.in}
\affiliation{Department of Physics, Indian Institute of Technology Bombay, Powai, Mumbai 400076, India}
\affiliation{Centre of Excellence in Quantum Information, Computation, Science and Technology, Indian Institute of Technology Bombay, Powai, Mumbai 400076, India.}
\affiliation{Centre for Quantum Technologies, National University of Singapore, 3 Science Drive 2, Singapore 117543, Singapore.}


\date{\today}
\begin{abstract}
 We present a variational quantum algorithm for differentiating several hypotheses encoded as quantum channels. Both state preparation and measurement are simultaneously optimized using success probability of single-shot discrimination as an objective function which can be calculated using localized measurements. Under constrained signal mode photon number quantum illumination we match the performance of known optimal 2-mode probes by simulating a bosonic circuit. Our results show that variational algorithms can prepare optimal states for binary hypothesis testing with resource constraints. Going beyond the binary hypothesis testing scenario, we also demonstrate that our variational algorithm can learn and discriminate between multiple hypotheses.
\end{abstract}

\maketitle


\paragraph{Introduction.---}
Quantum resources have been implied in the enhanced performance of technology tasks in the future such as computing, communication and hypothesis testing. In view of this, the presence of noisy intermediate scale quantum  (NISQ) computers presents an opportunity to go beyond algorithms that are classical in nature \cite{preskill2018quantum,bharti2022noisy}. Variational quantum algorithms (VQAs) \cite{cerezo2021variational,mcclean2016theory} are algorithms that solve optimization problems by evaluating a cost function using a parameterized quantum circuit (PQC) and updating parameters using classical optimization. Though VQAs have a wide array of applications \cite{Kandala_2017,peruzzo2014variational,PhysRevA.99.062304,PhysRevLett.127.220504,lau2022convex,PhysRevX.11.041045,marciniak2022optimal,PhysRevResearch.4.013083,chen2021variational}, their limitations such as barren plateaus in the optimization landscape \cite{mcclean2018barren,wang2021noise} have sparked discussion on reducing the depth of these circuits, and the use of local observables to avoid them \cite{Cerezo_2021}. Hence, shallow-depth circuits (circuits with depth $\mathcal{O}(1)$ or $\mathcal{O}(\log n)$ for $n$ qubits \cite{Cerezo_2021}) with a demonstrable advantage are highly sought after.\\
\indent In this manuscript, we propose a novel application of VQAs by demonstrating that shallow depth circuits can be employed for performing quantum channel discrimination. We apply our algorithm to the problem of quantum illumination \cite{doi:10.1126/science.1160627,2009Shapiro,2008Tan,PhysRevLett.118.070803,PhysRevA.98.012319,PhysRevA.103.012411,PhysRevResearch.2.023414}, where we show that low-depth parameterized quantum circuits (PQCs) can obtain known optimal values for Chernoff bound and trace distance in the case of quantum illumination. We begin by establishing results for discrimination between quantum channels on $n$ qubits in the following section and extend our analysis numerically by simulating bosonic modes for quantum illumination. 
\paragraph{Variational Quantum Hypothesis Testing.---} Our key insight is that shallow depth VQAs can perform binary hypothesis testing of generic quantum channels. We assert that PQCs are able to recover known optima by providing analytical results for arbitrary quantum hypothesis testing (QHT) tasks and numerical simulations for the task of Gaussian quantum illumination. We briefly review channel discrimination, which aims at distinguishing two generic quantum channels $\mathcal{E}_0:L(\mathcal{H})\to L(\mathcal{H})$ and likewise $\mathcal{E}_1$ using measurements, with the set of linear operators defined as $L(\mathcal{H})$. The null hypothesis $H_0$ corresponds to $\mathcal{E}_0$ and the alternate hypothesis $H_1$ to $\mathcal{E}_1$. An input quantum state $\rho\in D(\mathcal{H}\otimes\mathcal{H})$ is sent to one of two channels creating the output states $\rho_{i} = (\mathcal{E}_{i}\otimes \mathbb{I})(\rho)$ ($i = 0$ or $1$) \cite{aharonov1998quantum}. Following this, a measurement is performed over these output states using the positive operator-valued measure (POVM) $\{\Gamma,\mathbb{I}-\Gamma\}$ where $\Gamma$ is a valid POVM element in $L(\mathcal{H}\otimes\mathcal{H})$. The measurement outcome corresponding to $\Gamma$ is the acceptance criteria for hypothesis $H_0$ and outcome $\mathbb{I}-\Gamma$ is the acceptance criteria for $H_1$. The type-I error (false positive) is defined as $\alpha = 1-\text{Tr}(\Gamma\rho_0)$ and type-II error (false negative) is defined as $\beta = \text{Tr}(\Gamma\rho_1)$. Moving forward, we optimize the total error probability $(\alpha + \beta)/2$ under the assumption that either channel has equal likelihood to be applied.\\ 
\indent In our protocol, one of the two channels is acted on the prepared state $\rho$ following which the outcome of the POVM $\{\Gamma,\mathbb{I}-\Gamma\}$ is used to determine which channel had been applied. Optimal strategies differ based on whether the channel can be applied sequentially or in parallel \cite{PhysRevA.81.032339,PhysRevLett.127.200504}. In a general parallel strategy \cite{PhysRevLett.127.200504}, one would be discriminating between the maps $\mathcal{E}^{\otimes n}_0$ and $\mathcal{E}^{\otimes n}_1$. We examine a limited parallel strategy where the channel is applied on $n$ copies of the initial state to better study asymptotic behavior. The Holevo-Helstrom result bounds the total error probability for $\rho_0^{\otimes n}$ and $\rho_1^{\otimes n}$. This asymptotically decays exponentially with the exponent determined by the Chernoff bound between $\rho_0$ and $\rho_1$ \cite{PhysRevLett.98.160501,10.1214/08-AOS593}. In the case of symmetric hypothesis testing \cite{2014Li,hiai1991proper,887855}, the minimum error probability for a single copy use of the map is $1/2 - \|\mathcal{E}_0-\mathcal{E}_1\|_\diamond/4$ where $\|\mathcal{E}_0-\mathcal{E}_1\|_\diamond$ is the diamond distance between the two channels. This is defined as
\begin{equation}\label{eq:diamond}
    \|\mathcal{E}_0 - \mathcal{E}_1\|_\diamond = \sup_{\rho}(\|(\mathcal{E}_0\otimes\mathbb{I}_{\mathcal{H}})(\rho) - (\mathcal{E}_1\otimes\mathbb{I}_{\mathcal{H}})(\rho)\|_1),
\end{equation}
where $\|A\|_1 = \text{Tr}\sqrt{A^\dagger A}$ is the trace norm. This bound is a direct extension of the Holevo-Helstrom bound \cite{holevo1973bounds,Helstrom1969}. The diamond distance provides a fundamental bound which holds true for all states in $D(\mathcal{H}\otimes\mathcal{H}')$ with $\dim(\mathcal{H}')\geq\dim(\mathcal{H})$ \cite{aharonov1998quantum} and hence optimizing over states in $D(\mathcal{H}\otimes\mathcal{H})$ is sufficient to reach an optimal probe. 
\begin{figure}[ht]
    \centering    \includegraphics[width=\linewidth]{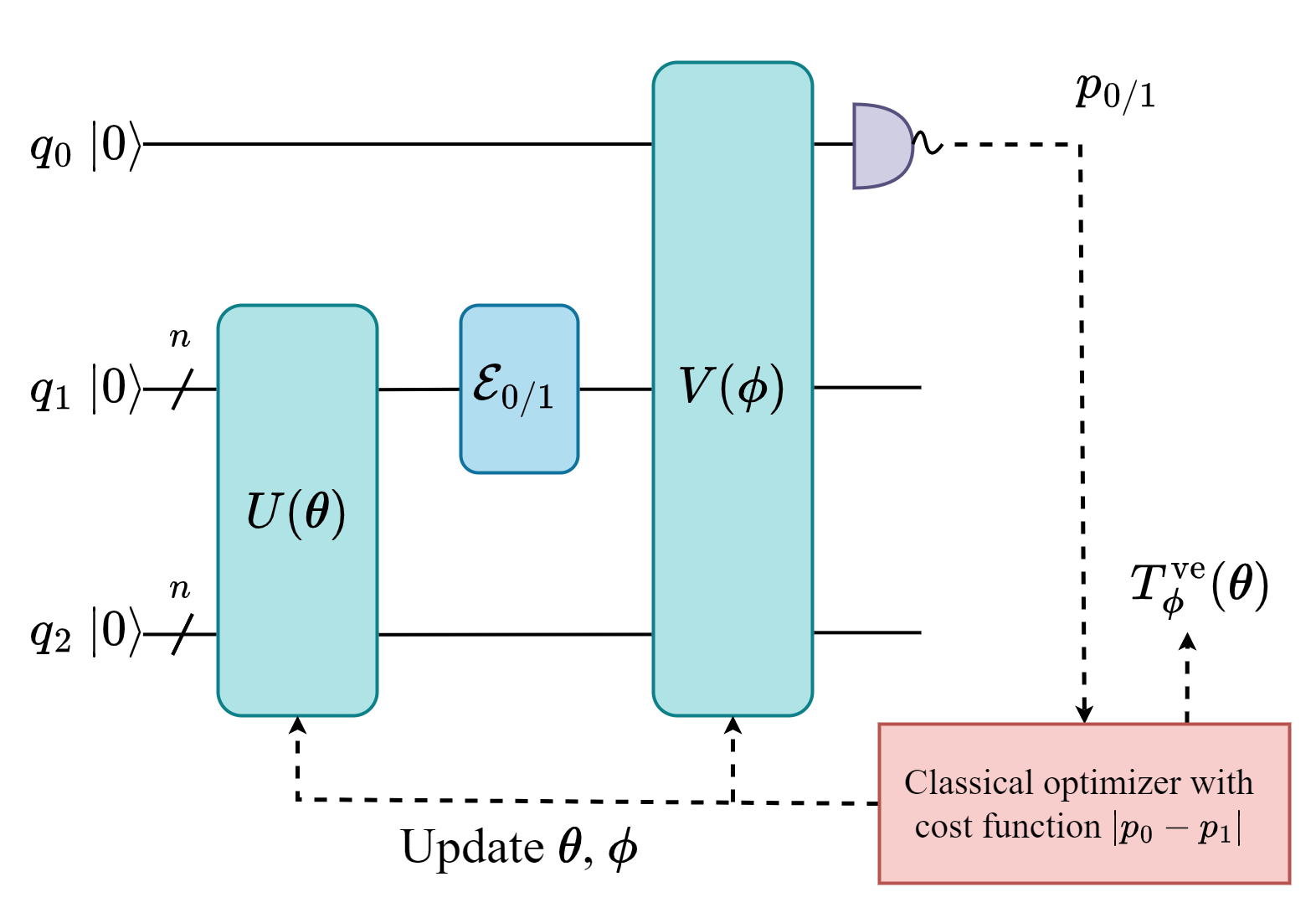}
    \caption{Circuit for variational QHT. The state preparation is over the two $n$ qubit registers (or modes) $q_1$ and $q_2$ and the map is applied on one mode. An ancillary qubit (or mode) $q_0$ is used for measurements.}
    \label{fig:circuit1}
\end{figure}

\indent For our variational algorithm, we prepare an entangled probe state $\rho_{\pmb{\theta}}$ in registers $q_1$ and $q_2$ using the parameterized unitary $U(\pmb{\theta})$ as depicted in Fig. \ref{fig:circuit1}. This is motivated by the advantage of entangled probe states \cite{PhysRevLett.87.270404,PhysRevLett.122.140404,PhysRevLett.102.250501,aharonov1998quantum} even in the case of entanglement breaking maps \cite{PhysRevA.72.014305}. Either the map $\mathcal{E}_0$ or $\mathcal{E}_1$ is then applied on the register $q_1$. The state after applying the map is $\rho_{i,\pmb{\theta}} = (\mathcal{E}_{i}\otimes \mathbb{I})(\rho_{\pmb{\theta}})$ ($i=\{0,1\}$).\\
\indent The optimal protocol is achieved by maximizing the trace distance $\|\rho_{0,\pmb{\theta}}-\rho_{1,\pmb{\theta}}\|_1$, as evidenced from Eq~\eqref{eq:diamond} and is known to be difficult \cite{chen2021variational,d2003quantum,watrous2008quantum,https://doi.org/10.48550/arxiv.2108.08406} to evaluate. We get around this by using an estimate of the trace distance as an objective function which makes use of a two-outcome POVM $\{\Gamma,\mathbb{I}-\Gamma\}$. The ancillary qubit $q_0$ (see Fig. \ref{fig:circuit1}) is used for measurements using the parameterized unitary $V(\pmb{\phi})$ that encodes a Naimark extension \cite{wilde2013sequential} of the POVM $\{\Gamma,\mathbb{I}-\Gamma\}$. The probability for outcome $\Gamma$ is $p_{i} = \text{Tr}(\Gamma\rho_{i,\pmb{\theta}}) = \text{Tr}\left((\ketbra{0}{0}\otimes I)V(\pmb{\phi})(\ketbra{0}{0}\otimes \rho_{i,\pmb{\theta}})V(\pmb{\phi})^\dagger\right)$. Outcome $\Gamma$ is the acceptance criteria for hypothesis $H_0$ and $\mathbb{I}-\Gamma$ is the acceptance criteria for hypothesis $H_1$. Hence, $(p_0 + 1 - p_1)/2$ corresponds to the success probability in the classification task. We define a variational estimate of the trace distance $T^{\mathrm{ve}}_{\pmb{\phi}}(\pmb{\theta})$, which is always bounded above by the true trace distance and this bound is saturated when $V(\pmb{\phi})$ encodes a Naimark extension of the Helstrom POVM \cite{chen2021variational}. The variational estimate of the trace distance is defined as 
\begin{equation}\label{eq:TDest}
    T^{\mathrm{ve}}_{\pmb{\phi}}(\pmb{\theta}) = |p_0-p_1| \leq \frac{1}{2}\|\rho_{0,\pmb{\theta}} - \rho_{1,\pmb{\theta}}\|_1.
\end{equation}
\indent Our algorithm updates parameters to maximize $T^{\mathrm{ve}}_{\pmb{\phi}}(\pmb{\theta})$. $T^{\mathrm{ve}}_{\pmb{\phi}}(\pmb{\theta})$ is evaluated with a certain value of $\pmb{\theta}$ and $\pmb{\phi}$ using the quantum computer. These values are then fed into a classical optimizer which proposes updated parameter values for the next iteration and this is repeated till the convergence condition for $T^{\mathrm{ve}}_{\pmb{\phi}}(\pmb{\theta})$ is met, after which the values of $\pmb{\theta}$ and $\pmb{\phi}$ are reported. The optimization aims to approach $\pmb{\theta}_0,\pmb{\phi}_0 = \argmax_{\pmb{\theta},\pmb{\phi}}(T^{\mathrm{ve}}_{\pmb{\phi}}(\pmb{\theta}))$. Maximizing the value of $T^{\mathrm{ve}}_{\pmb{\phi}}(\pmb{\theta})$ clearly maximizes the success probability of discrimination.

\indent While $T^{\mathrm{ve}}_{\pmb{\theta}}(\pmb{\phi})$ simplifies the optimization of trace distance by replacing true trace distance with $T^{\mathrm{ve}}$, the numerous parameters in the definitions of $U(\pmb{\theta})$ and $V(\pmb{\phi})$ may result in sub-optimal results. Suboptimality can arise in two ways, the first of which is that the optimization may fail to find the optimal states even if $U(\pmb{\theta})$ is expressible enough for the optimal states. The second source of suboptimality could be that though the qubit measurements produce optimal states according to $T^{\mathrm{ve}}_{\pmb{\theta}}(\pmb{\phi})$, these optimized probe states might have a small trace distance evaluated by $T(\rho_{0,\pmb{\theta}},\rho_{1,\pmb{\theta}})$. The second source of suboptimality is different from the first since an expressible $U(\pmb{\theta})$ is unrelated to the expressibility of $V(\pmb{\phi})$, which is simultaneously required to ensure that the ``good states are recognized''.\\
\indent We now proceed to prove that if $U(\pmb{\theta})$ and $V(\pmb{\phi})$ are sufficiently expressible, the optimal states generated by the VQAs indeed optimize the real trace distance. This is to verify that our shallow depth circuits are sufficiently expressible for both state preparation and the measurements needed for hypothesis testing. To see this, consider a fixed $U(\pmb{\theta})$ which generates a (perhaps) sub-optimal state, and then consider the optimization of $V(\pmb{\phi})$. Since $V(\pmb{\phi})$ represents the Naimark extension for an ideal two-outcome POVM, if $V(\pmb{\phi})$ is expressible enough, then it is guaranteed that the optimization of $T^{\mathrm{ve}}_{\pmb{\theta}}(\pmb{\phi})$ converges to the real trace distance $T(\rho_{0,\pmb{\theta}},\rho_{1,\pmb{\theta}})$ (further discussed in the supplemental information \cite{supplemental}, which includes references \cite{ben2009complexity,benenti2010computing,watrous2007distinguishing,JOHANSSON20121760,JOHANSSON20131234,ferraro2005gaussian,paris2012modern}). If $U(\pmb{\theta})$ is expressible, then the globally optimal states are expressible by the VQA. Judging the requirements on the expressibility of $U(\pmb{\theta})$ can be done by considering the state preparation to occur using a control pulse $\gamma(t)$ which can be represented as $b_\gamma$ classical bits. If the ideal probe state is reachable in polynomial time and we prepare $\rho_{\pmb{\theta'}}$ such that $\|\rho_{\pmb{\theta}'} - \rho_{\pmb{\theta}_0}\|\leq \epsilon$, the true trace distance $T(\rho_{0,{\pmb{\theta}'}},\rho_{1,{\pmb{\theta}'}}) \in [d(1-\epsilon),d]$ where $d = T(\rho_{0,{\pmb{\theta}_0}},\rho_{1,{\pmb{\theta}_0}})$ is the maximal trace distance. We know \cite{PhysRevLett.113.010502} that $b_\gamma$ scales with $\log(1/\epsilon)$ and the dimension of the manifold of polynomial time reachable states and this gives us an estimate of how much information is needed for having a good state preparation (see \cite{supplemental} for further details). This completes the analysis that our VQAs can indeed theoretically find the optimal states and measurements to perform the hypothesis test. We now apply our ideas to the example of Gaussian variational quantum illumination to show a practical application of this theoretical result.

\paragraph{Gaussian Variational Quantum Illumination.---} 
Quantum illumination \cite{doi:10.1126/science.1160627,2009Shapiro,2008Tan,PhysRevLett.118.070803,PhysRevA.98.012319,PhysRevA.103.012411,PhysRevResearch.2.023414} is the task of using entangled light to find out if a weakly reflective beam-splitter is present in a bright thermal bath. The map acting on a two-mode bosonic state $\rho_{SI}$ is given as $\mathcal{E}_\eta(\rho_{SI}) = \text{Tr}_S(U_\eta(\rho_B\otimes\rho_{SI})U_\eta^\dagger)$ where $U_\eta = \exp(i\sin^{-1}(\eta)(a_S^\dagger a_B - a_S a_B^\dagger))\otimes\mathbb{I}_I$ and $\rho_B$ is the thermal state with an average of $N_B$ photons. The hypothesis $H_0$ corresponds to the object being absent or equivalently beam-splitter has zero reflectivity hence is the map $\mathcal{E}_{\eta=0}$, and the hypothesis $H_1$ corresponds to the object being present with some weak reflectivity $r$ hence is the map $\mathcal{E}_{\eta=r}$. Despite clearly being an entanglement breaking map, there is an advantage in using a signal-idler entangled state as first demonstrated in \cite{doi:10.1126/science.1160627}. This advantage holds true even in the limit of $N_B\gg N_S$ where $N_S$ is the average number of signal photons \cite{2009Shapiro, 2008Tan}.\\
\begin{figure}[ht]
    \centering
    \includegraphics[width=\linewidth]{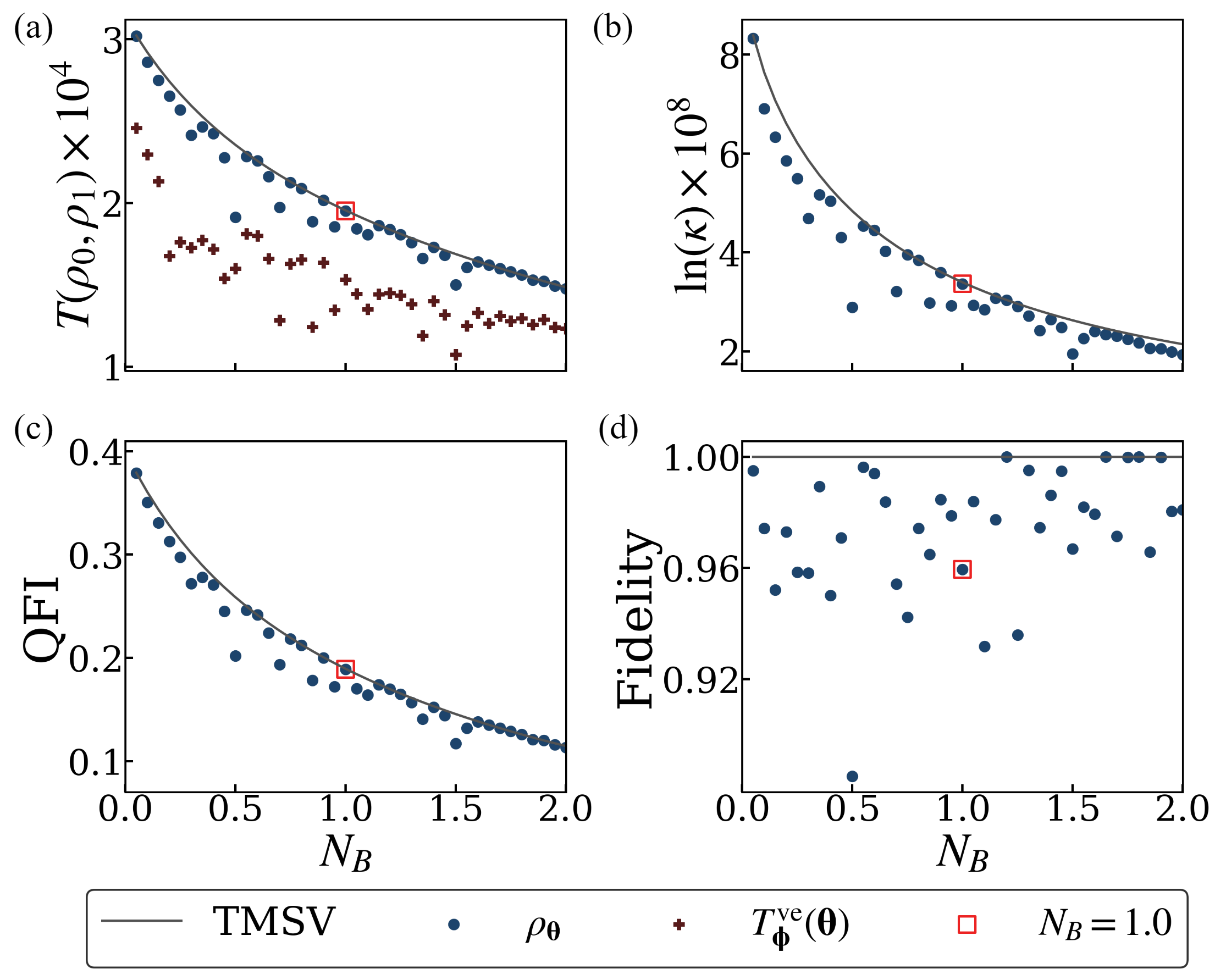}
    \caption{Simulation results with constrains mentioned in the main text. Subfigure (a) compares trace distance (blue dots are true trace distance and brown symbols are estimated trace distance), (b) compares Chernoff bound of the optimized state to TMSV, (c) compares QFI and (d) shows fidelity of the optimized state to TMSV. The red square represents shows an inequivalent optimized state with performance equivalent to the TMSV.}
    \label{fig:resultsVac}
\end{figure}
The probe's performance is judged by constraining the average signal photon number to be $N_S$. This relates the error probability to the energy constrained diamond norm \cite{winter2017energy}. All the results in this section are in the regime of a signal mode having a constrained photon number of $N_S$. Under this constraint, the optimal probe for the task of quantum illumination for single-shot discrimination is proven to be the two-mode squeezed vacuum state (TMSV) as shown in \cite{PhysRevA.98.012101,PhysRevA.103.062413}. The optimality of this state comes from the fact that it saturates the Chernoff bound \cite{PhysRevA.103.062413} as well as the quantum relative entropy \cite{PhysRevA.98.012101} making it suitable for both symmetric and asymmetric hypothesis testing. Even when compared to non-Gaussian states such as a photon-added TMSV \cite{https://doi.org/10.48550/arxiv.2110.06891}, TMSV remains optimal for fixed $N_S$. We note that TMSV state is suboptimal for the Helstrom bound \cite{PhysRevA.103.062413}.\\
\indent Our circuit for the task of variational quantum illumination follows the same protocol as that described in Fig. \ref{fig:circuit1} with $q_{0/1}$ being identified as bosonic modes signal $S$ and idler $I$ respectively. We parameterize the unitary $U$ as coherent displacements followed by two mode squeezing each by variable parameters and the unitary $V$ as coherent displacements followed by conditional phase gates followed by beam-splitters. When optimised over these resources, we note that our results compare favourably with known optimal results as seen in Fig. \ref{fig:resultsVac}. All unitary Gaussian transformations are representable in the Bloch-Messiah decomposition \cite{PhysRevA.94.062109}. While this is sufficient for complete parameterization for $n$-mode Gaussian states, we opt for a hardware-efficient ansatz \cite{Kandala_2017} to see the performance of the algorithm in the restricted setting of having $U$ and $V$ both be composed of single mode operations such as displacements followed by limited two-mode operations such as beam-splitters.\\
\indent The unitary $V$ encodes the Naimark extension of a two outcome POVM by having the vacuum state measurement on mode $q_0$ correspond to applying a joint POVM on the signal-idler state. One can always construct a unitary transformation acting on $q_0\otimes q_1\otimes q_2$ to transform a projective measurement on $q_0$ to a projective measurement in the space of $q_1\otimes q_2$ \cite{Buscemi_2005}. The procedure to construct unitary transformations using purely Gaussian operations is summarized in the supplemental information \cite{supplemental}.\\
\begin{figure}[ht]
    \centering
    \includegraphics[width=0.9\linewidth]{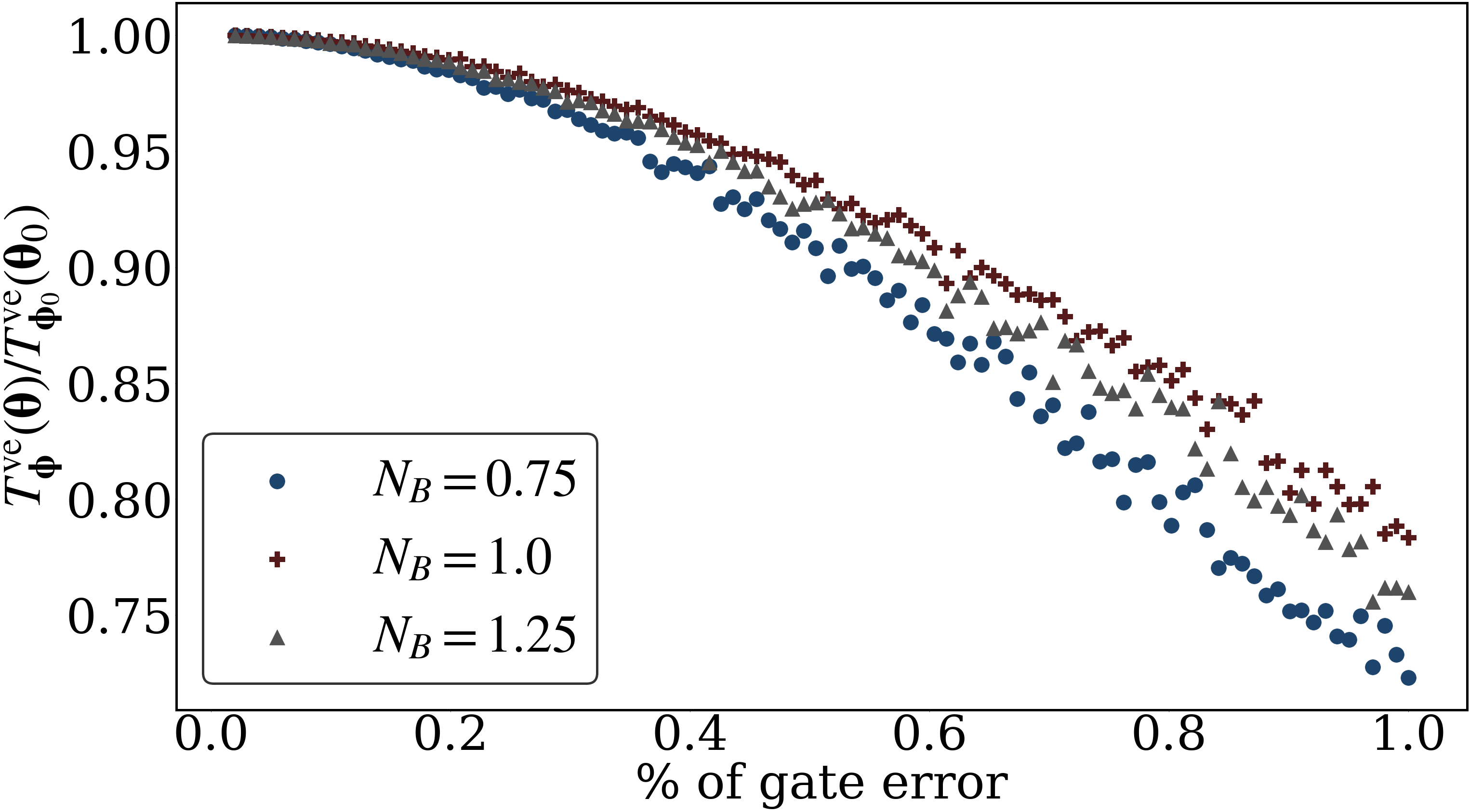}
    \caption{Noise robustness of variational quantum illumination showing Gaussian noise variance in gate error against the ratio of $T^{\mathrm{ve}}_{\pmb{\phi}}(\pmb{\theta})$ to the value it takes with zero gate error, optimized with the constraint that $N_S = 0.1$.}
    \label{fig:error}
\end{figure}
Our simulation (see \cite{supplemental} for code) over bosonic modes \cite{bromley2020applications,killoran2019strawberry} maximizes the function $T^{\mathrm{ve}}_{\pmb{\theta}}(\pmb{\phi})$. In Fig.~\ref{fig:resultsVac}(a) we highlight both the actual trace distance (blue dots) and the estimated trace distance (brown symbols) after optimization. The optimization regularly is able to find the true optimum, in this case given by the TMSV state. Fig.~\ref{fig:resultsVac}(b) shows that the optimality of TMSV for Chernoff bound is nearly matched with our low depth circuit that optimizes $T^{\mathrm{ve}}_{\pmb{\theta}}(\pmb{\phi})$. This shows that despite optimizing for single-copy discrimination, it is able to optimize to the state which is asymptotically optimal for $n$ copies. Given that the family of maps $U_\eta$ is continuous, our technique can be considered as a quantum sensing task \cite{PhysRevLett.108.170502}. In the limit of $N_S\to0$, the optimal quantum Fisher information (QFI) for this sensing task is given by the TMSV \cite{PhysRevLett.118.070803}. In Fig. \ref{fig:resultsVac}(c) we compare the QFI of the optimized state to this optimal value and observe that it is a good quantum sensor input despite being trained for only a fixed value of $\eta$. Fig. \ref{fig:resultsVac}(d) shows that despite reaching near the performance of the TMSV, the optimized states are not exactly equal to the TMSV, implying the existence of a manifold of states that perform just as well as the TMSV. We consider an example with $N_B = 1.0$ (red square) which matches the performance of the TMSV constrained with $N_S=N_I=0.1$. This state has average photons in idler and signal as $N_I \approx 0.17$ and $N_S = 0.1$  as well as differences in the reduced entropy and the coefficients of the Schmidt decomposition from the TMSV (see the supplemental information \cite{supplemental} for details), demonstrating that it isn't equivalent upto local one mode unitary transformations. To test noise resilience, we plot the averaged objective while the optimized circuit parameters have Gaussian noise (see \ref{fig:error}) and find favorable scaling with the error percentage. Our protocol can be trained for in one parameter regime and deployed in another, as discussed in the supplemental information \cite{supplemental}. Additionally, we show a way to construct a near-optimal Gaussian POVM for quantum illumination which supplements existing detection schemes \cite{9087936}.
\paragraph{Multiple Hypothesis Testing.---} We can extend the use of our algorithm to the case of multiple channel discrimination \cite{Vazquez_Vilar_2016, 1055351}. We have quantum channels $\mathcal{E}_1,\mathcal{E}_2,\dots \mathcal{E}_k$. Extending our earlier description, we now discriminate $\rho_i = (\mathcal{E}_i\otimes\mathbb{I})$ using the POVM set $\{\Pi_1,\dots\Pi_k\}$. Maximizing the success probability $P_{\mathrm{success}} = \sum_{i=1}^{k}\text{Tr}(\Pi_i\rho_i)/k$ (assuming each channel to be equally likely) is a semidefinite program over positive semidefinite variables $\rho$ and $\{\Pi_i\}$ with constraints $\text{Tr}(\rho) = 1$ and $\sum_{i}\Pi_i = \mathbb{I}$. Defining the ancilla $q_0$ as a qudit with $k$ levels makes it possible to encode a Naimark extension of this POVM set \cite{wilde2013sequential}. Proceeding with the same hybrid algorithm described for the binary case, we can now optimize over $\pmb{\theta}$ and $\pmb{\phi}$ to obtain the optimal state and measurement for this task.\\
\begin{figure}[ht]
    \centering
    \includegraphics[width=0.9\linewidth]{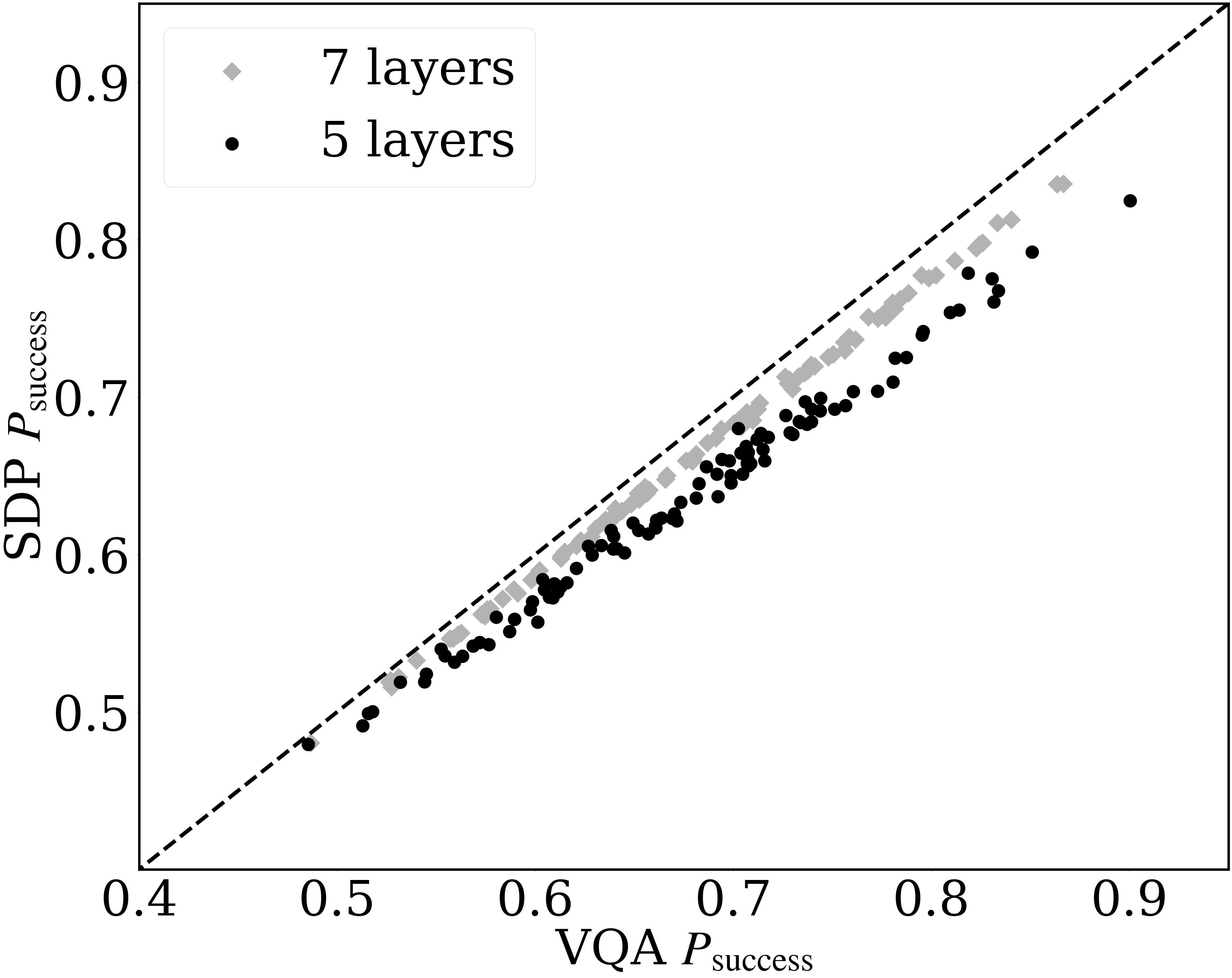}
    \caption{Success probability of VQAs for ternary hypothesis test of two-qubit channels over four qubits as described above with a qutrit ancilla. Ansatz $V(\pmb{\phi})$ (of 5 or 7 layers) is optimized while $U(\pmb{\theta})$ is fixed and prepares a GHZ state. The VQA outcome is compared against classical SDP result, which can be performed only for small system sizes.}
    \label{fig:ternary}
\end{figure}
\indent We explore a basic case of telling apart three different 2-qubit unitary channels. There are known analytical results for the multiple state discrimination of asymptotically many copies \cite{Li_2016,PhysRevLett.129.180502}. However there aren't known results for the case of single-shot discrimination. Hence we compare the performance of our VQA with a convex optimization done using the cvxpy package \cite{diamond2016cvxpy,agrawal2018rewriting}. Our results are shown in Fig. \ref{fig:ternary} showing that our algorithm is capable of near optimal performance even in the case of multiple hypotheses.
\paragraph{Discussion.---} Hypothesis testing is a task central to probability theory for the discrimination of probability distributions \cite{scott2005neyman}. QHT extends this to quantum channel discrimination and quantum state discrimination, and our work shows how VQAs can be applied for this task. The use of quantum resources have demonstrated an advantage here and have applications in exoplanet spectroscopy \cite{PhysRevA.107.022409}, superresolution between two incoherent optical sources \cite{PhysRevX.6.031033}, and as discussed in this work, quantum illumination \cite{doi:10.1126/science.1160627,2009Shapiro,2008Tan,PhysRevLett.118.070803,PhysRevA.98.012319,PhysRevA.103.012411,PhysRevResearch.2.023414}. Our VQA when applied for quantum illumination, is able to find a probe state that matches the optimal performance under signal photon number constraint (see Fig. \ref{fig:resultsVac}). These results demonstrate noise resilience as shown in Fig. \ref{fig:error} supporting the experimental viability of such algorithms for near-term quantum devices \cite{preskill2018quantum,bharti2022noisy}.\\
\indent Our results on QHT have direct application in general channel discrimination \cite{PhysRevLett.101.180501,PhysRevLett.101.180501} and tasks such as quantum reading \cite{PhysRevLett.106.090504,doi:10.1126/sciadv.abc7796}. It must be noted that the related generalized tasks such as the testing of a quantum channel to be an isometry, the determination of distinguishability between $k$ input states for a quantum channel, or checking equality of two unitaries are QMA-complete \cite{rosgen2010testing, beigi2007complexity,janzing2005non}. We find that the information content of a control pulse required to make the trace distance between the states lie in $[d(1-\epsilon),d]$ (where $d$ is the maximum possible trace distance between the states obtained on sending a probe state through the channels) scales as $\log(1/\epsilon)$ if the optimal state is polynomially reachable. Hence our algorithm would also take exponential resources to converge for difficult tasks, which situates general complexity-theoretic results within our framework.\\
\indent Our algorithm nonetheless performs well as benchmarked by channel discrimination measures. The algorithm achieves near optimal results for the discrimination between two-qubit unitaries \cite{supplemental}. Our work can also be clearly generalized to sequential and parallel channel discrimination \cite{PhysRevLett.127.200504}. These have practical applications in a variety of tasks such as quantum metrology and certifying quantum circuits \cite{cert_cqt}. We speculate that via the Pinsker's inequality \cite{https://doi.org/10.48550/arxiv.2005.04553} which lower bounds quantum relative entropy in terms of the trace distance, we can generalize our algorithm to asymmetric QHT \cite{PhysRevA.90.052307}. Hence variational QHT is sure to find disparate applications in future quantum technologies.

\begin{acknowledgments}
\paragraph*{Acknowledgments.---} S.V. acknowledges support from Government of India DST-QUEST grant number DST/ICPS/QuST/Theme-4/2019 and thanks Saikat Guha for valuable discussion on discriminating multiple hypotheses.
\end{acknowledgments}

\clearpage
\appendix
\onecolumngrid

\section*{Appendix: Shallow-Depth Variational Quantum Hypothesis Testing}
\section{Diamond norm estimation}
We have two quantum channels $\mathcal{E}_0:L(\mathcal{H})\to L(\mathcal{H})$ and $\mathcal{E}_1:L(\mathcal{H})\to L(\mathcal{H})$ where $L(\mathcal{H})$ is the set of linear operators from Hilbert space $\mathcal{H}$ to $\mathcal{H}$. We denote subset of $L(\mathcal{H})$ that are density operators as $D(\mathcal{H})$. We proceed on the task of channel discrimination, and as highlighted in the main text, we pick a state $\rho\in D(\mathcal{H}\otimes\mathcal{H})$ as a probe to the map $\mathcal{E}_{i}\otimes\mathbb{I}_{\mathcal{H}}$ ($i = 0$ or $1$) and then use a POVM $\{\Gamma,\mathbb{I} - \Gamma\}$ for classification. We present this as an algorithm for estimation of diamond distance as well.\\
Using the parameterized circuit shown in Fig. 1 of the letter, we define the state after applying the map as $\rho_{i,\pmb{\theta}} = (\mathcal{E}_i\otimes \mathbb{I})(\rho_{\pmb{\theta}})$ and $p_i = \text{Tr}(\Gamma\rho_{i,\pmb{\theta}})$. If the unitary $V(\pmb{\phi})$ encodes the Naimark extension of the POVM $\{\Gamma,\mathbb{I} - \Gamma\}$, we obtain $p_i = \text{Tr}\left((\ketbra{0}{0}\otimes I)V(\pmb{\phi})(\ketbra{0}{0}\otimes \rho_{i,\pmb{\theta}})V(\pmb{\phi})^\dagger\right)$. From this we define the cost function in the following equation which is bounded above by the diamond distance since the diamond distance is defined as the supremum of $\|\rho_{0,\pmb{\theta}} - \rho_{1,\pmb{\theta}}\|_1$ over all possible $\rho\in D(\mathcal{H}\otimes\mathcal{H})$.
\begin{equation}\label{eq:TDest}
    T^{\mathrm{ve}}_{\pmb{\phi}}(\pmb{\theta}) = |p_0-p_1| \leq \frac{1}{2}\|\rho_{0,\pmb{\theta}} - \rho_{1,\pmb{\theta}}\|_1 \leq \frac{1}{2}\|\mathcal{E}_0 - \mathcal{E}_1\|_\diamond
\end{equation}
\begin{algorithm}[H]
\SetAlgoLined
Take input as $\mathcal{E}_0$ and $\mathcal{E}_1$ which are CPTP maps from density matrices in $\mathcal{H}$ to density matrices in $\mathcal{H}$.\\
Initialize $\pmb{\theta}$ and $\pmb{\phi}$ which are parameters for $U(\pmb{\theta})$ and $V(\pmb{\phi})$.\\
Define convergence condition for cost function $T^{\mathrm{ve}}_{\pmb{\phi}}(\pmb{\theta})$.\\
\While{$T^{\mathrm{ve}}_{\pmb{\phi}}(\pmb{\theta})$ has not converged}{
    \For{$\pmb{\theta}_i = \pmb{\theta}+\Delta\pmb{\theta}_i$ and $\pmb{\phi}_i = \pmb{\phi}+\Delta\pmb{\phi}_i$ for some set of $\Delta\pmb{\theta}_i$ and $\Delta\pmb{\phi}_i$}{
        Run circuit with parameters $\pmb{\theta}_i$ and $\pmb{\phi}_i$ with CPTP map as $\mathcal{E}_0\otimes\mathbb{I}_n$\\
        Using measurements on $q_0$ obtain value of $p_0$.\\
        Run circuit with parameters $\pmb{\theta}_i$ and $\pmb{\phi}_i$ with CPTP map as $\mathcal{E}_1\otimes\mathbb{I}_n$\\
        Using measurements on $q_0$ obtain value of $p_1$.\\
        Assign $T^{\mathrm{ve}}_{\pmb{\phi}_i}(\pmb{\theta}_i) = |p_0 - p_1|$}
    Using $T^{\mathrm{ve}}_{\pmb{\phi}_i}(\pmb{\theta}_i)$ update $\pmb{\theta}$ and $\pmb{\phi}$ with a classical optimizer.\\
    Check convergence of $T^{\mathrm{ve}}_{\pmb{\phi}}(\pmb{\theta})$.
}
Assign estimated diamond norm $= 2T^{\mathrm{ve}}_{\pmb{\phi}}(\pmb{\theta})$\\
\textbf{return} final values of $\pmb{\theta}$, $\pmb{\phi}$ and estimated diamond norm.
\caption{Variational quantum algorithm to estimate diamond distance}
\end{algorithm}
\begin{figure}[ht]
    \centering
    \includegraphics[width=0.85\textwidth]{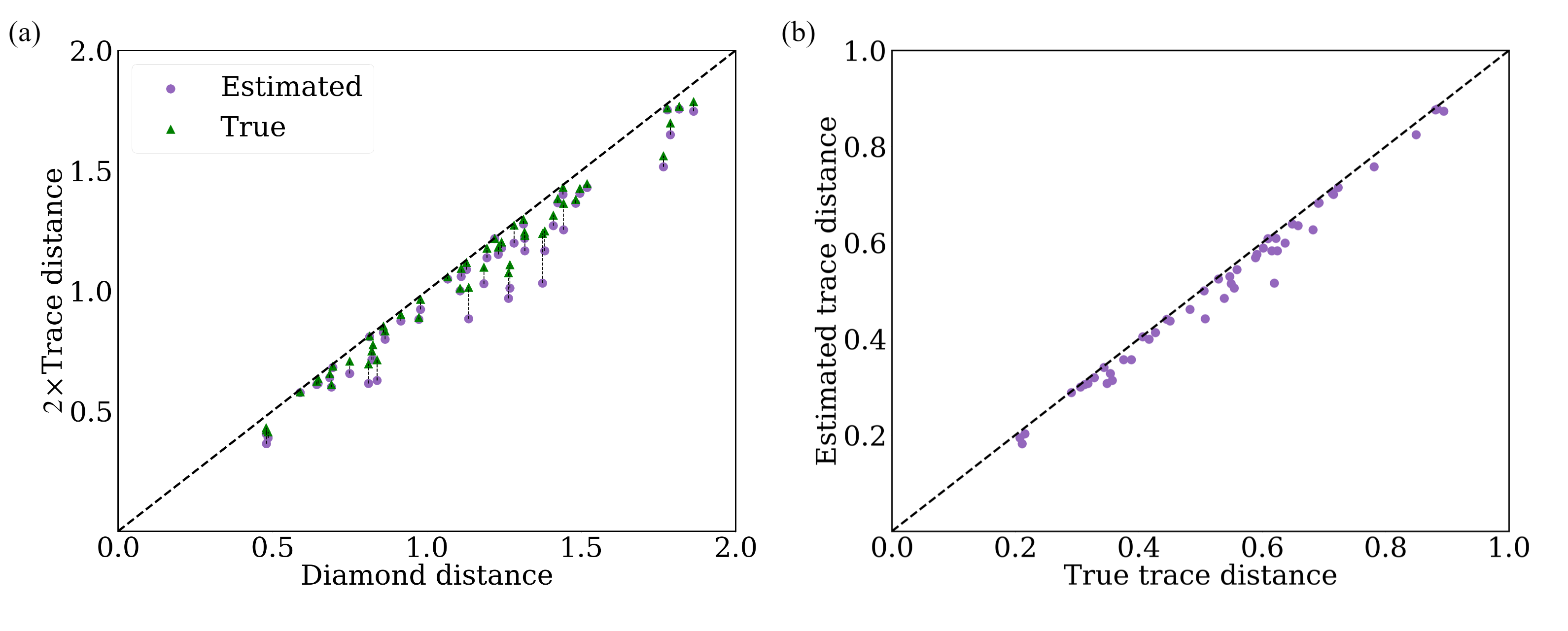}
    \caption{Estimation for diamond distance between the identity map and an arbitrary $2$ qubit unitary map. (a) Shows how well the value of estimated $T^{\mathrm{ve}}_{\pmb{\phi}}(\pmb{\theta})$ and the true trace distance $T(\mathcal{E}_0(\rho_{\pmb{\theta}}),\mathcal{E}_1(\rho_{\pmb{\theta}}))$ for the optimized state matches against the analytical value. (b) Shows how well the estimated trace distance matches against the true trace distance for the state $\rho_{\pmb{\theta}}$}
    \label{fig:diamondNormU}
\end{figure}
\begin{figure}[ht]
    \centering
    \includegraphics[width=0.85\textwidth]{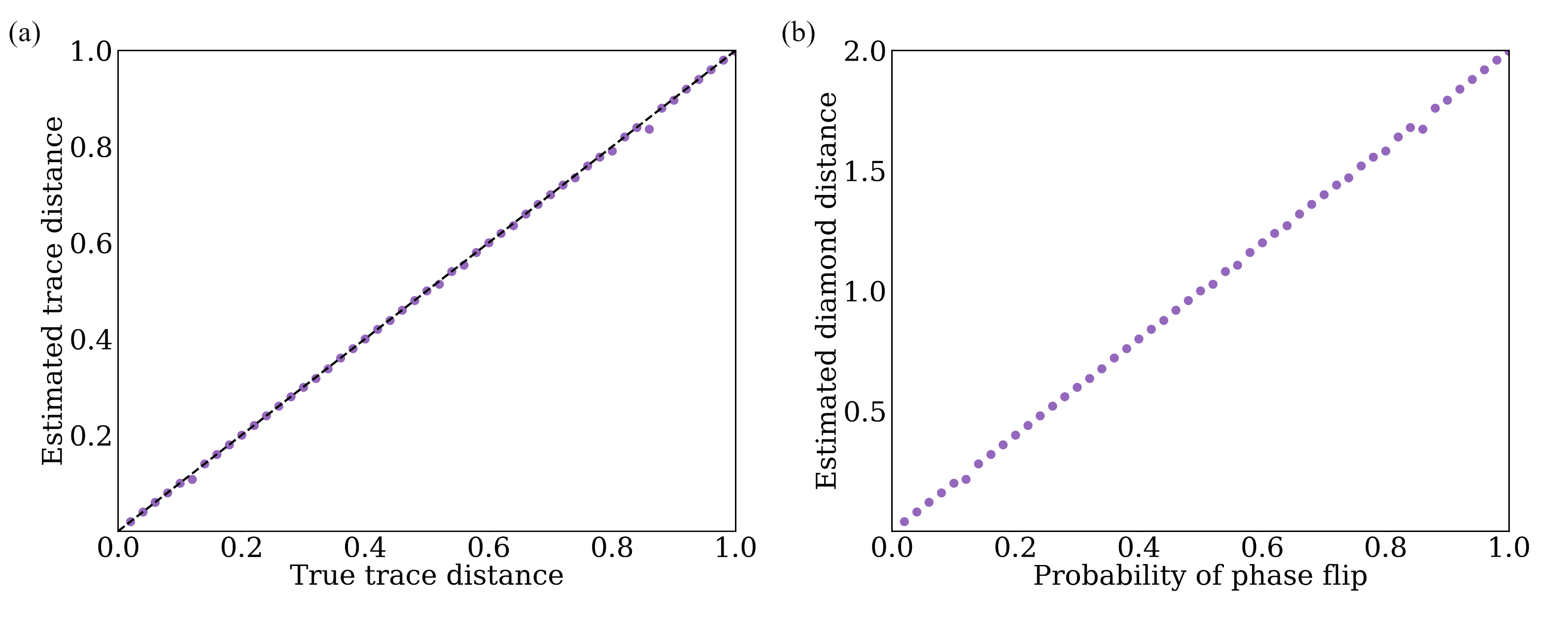}
    \caption{Estimation for diamond distance between the identity map and a phase flip map. (a) Shows how well the estimated trace distance matches against the true trace distance for the state $\rho_{\pmb{\theta}}$. (b) Shows the result of the estimated diamond norm against the probability of phase flip.}
    \label{fig:diamondNormBitflip}
\end{figure}
There are existing ways to find the diamond distance in $\mathcal{O}(poly(\dim{\mathcal{H}}))$ since it is a convex optimization problem \cite{ben2009complexity} but this clearly will grow exponentially in size as we increase the number of qubits. Our algorithm is clearly scalable for such situations and can produce results using a hardware efficient ansatz.\\
\indent Figures \ref{fig:diamondNormU} and \ref{fig:diamondNormBitflip} show results of simulations for estimating of diamond norm. The diamond distance between a unitary map and the identity map is the diameter of the circle which is able to contain all the eigenvalues of the unitary operation \cite{benenti2010computing}. For Kraus maps such as the phase flip map of form $\mathcal{E}(\rho) = (1-p)\rho + pZ\rho Z$ have a diamond distance of $2p$ from the identity map \cite{watrous2007distinguishing}.\\
\indent The simulations were carried out using QuTip \cite{JOHANSSON20121760,JOHANSSON20131234}. The circuit had five qubits with one of them being used in the trace distance estimation subroutine and the other four being used for state preparation after which the quantum map is applied on the first two of the four. The ansatz used was a hardware efficient ansatz \cite{Kandala_2017} where we have single qubit rotations followed by an entangling operation which in this case is made of only CZ gates.

\section{Information content for optimal hypothesis testing}
In this section we analyze the minimum amount of information content required for reaching an $\epsilon$-neighbourhood of the state which is optimal for hypothesis testing. This is to understand how much expressibility the ansatz used for state preparation requires to get a sufficiently good probe state. Here we consider the quantum channels $\mathcal{E}_0:L(\mathcal{H})\to L(\mathcal{H})$ and $\mathcal{E}_1:L(\mathcal{H})\to L(\mathcal{H})$ which satisfy the relation $\|((\mathcal{E}_0-\mathcal{E}_1)\otimes\mathbb{I})\rho_\mathrm{ideal}\|_1 = \|\mathcal{E}_0-\mathcal{E}_1\|_\diamond = d$.\\
To be able to reach the state $\rho_\mathrm{ideal}$ requires providing some classical information which encodes the quantum control problem of approaching this state. We assume that $\rho_{\mathrm{ideal}}\in \mathcal{W}^+$ which is the set of time-polynomial reachable states in $D(\mathcal{H}\otimes\mathcal{H})$ using a certain control scheme $\gamma(t)$ where the set of reachable states are $\mathcal{W}$. Let there be a state $\rho_\mathrm{real}$ in the epsilon neighbourhood of $\rho_\mathrm{ideal}$ 
\begin{equation}
    \|\rho_\mathrm{ideal} - \rho_\mathrm{real}\|_1 \leq \epsilon.
\end{equation}
Using the results of \cite{PhysRevLett.113.010502}, the number of bits which can encode the control pulse $\gamma(t)$ must satisfy
\begin{equation}\label{eq:minInfo}
    b_\gamma \geq \dim(\mathcal{W}^+)\log_2\left(\frac{1}{\epsilon}\right)
\end{equation}
This can be understood by dividing the space of $\mathcal{W}^+$ into epsilon balls which would have a volume scaled as $\varepsilon^{\dim(\mathcal{W}^+)}$ with respect to the total volume of the space. The information content in $\gamma(t)$ must be enough to specify the epsilon ball which we wish to be in which leads directly to \eqref{eq:minInfo}. We define the operator $T = (\mathcal{E}_1 - \mathcal{E}_0)\otimes\mathbb{I}$ and the operator norm $\|T\|_1 = \sup_\rho\|T\rho\|$ for $\rho\in D(\mathcal{H}\otimes \mathcal{H})$. We can take note of the following from the properties of the diamond norm \cite{aharonov1998quantum}.
\begin{gather}
    \|T\rho_\mathrm{real}\|_1 \leq \|T\rho_\mathrm{ideal}\| = d\\
    \|T(\rho_\mathrm{ideal}-\rho_\mathrm{real})\|_1 \leq \|T\|_1\|(\rho_\mathrm{ideal}-\rho_\mathrm{real})\|_1 \leq d\varepsilon
\end{gather}
Using the triangle inequality for the 1-norm we obtain
\begin{equation}
     \|T(\rho_\mathrm{ideal}-\rho_\mathrm{real})\|_1 \geq \left|\|T\rho_\mathrm{ideal}\|_1 - \|T\rho_\mathrm{real}\|_1\right| = d - \|T\rho_\mathrm{real}\|_1 \geq 0
\end{equation}
Combining the above inequalities, we obtain
\begin{equation}
    0 \leq d - \|T\rho_\mathrm{real}\|_1 \leq d\epsilon \implies \|T\rho_\mathrm{real}\|_1 \in [d(1-\varepsilon),d]
\end{equation}
Notably the minimal $b_\gamma$ is independent of $d$. This expression shows that being in the $\varepsilon$ neighbourhood of the state which maximizes trace distance implies that the trace distance now lies between $d(1-\varepsilon)$ and $d$ irrespective of the value of $d$ and will require the same amount of classical information. The computational toughness will arise in the fact that if $d$ is small enough, even the best possible state is unable to tell apart the two channels.

\section{Simultaneous optimization in the algorithm}
In this section we will prove that the method of simultaneous optimization employed for the algorithm used for variational quantum hypothesis testing is valid. We first begin with defining the form of our estimated trace distance. There are ways of estimating trace distance using a variational quantum algorithm as shown in \cite{chen2021variational} and \cite{https://doi.org/10.48550/arxiv.2108.08406}. The main clue lies in the following definition of trace distance.
\begin{equation}\label{eq:varTr}
T(\rho_0,\rho_1) = \sup_{P\leq\mathbb{I}}(\text{Tr}(P(\rho_0-\rho_1)))
\end{equation}
We can variationally optimize the POVM $P$ to obtain an estimate of the trace distance. To do this using a unitary operation, we must embed the POVM into the unitary operator. For this the Naimark extension can be used \cite{wilde2013sequential}.
\begin{theorem}[Naimark extension]
For any POVM $\{\Gamma_i\}_{i\in O}$ acting on a system $S$, there exists a unitary $U_{PS}$ (acting on a probe system $P$ and the system $S$) and an orthonormal basis $\{\ket{i}_P\}_{i\in O}$ such that
\begin{equation}
    \text{Tr}\left((\ketbra{i}{i}\otimes I_S)U_{PS}(\ketbra{i}{i}\otimes \rho_S)U_{PS}^\dagger\right) = \text{Tr}(\Gamma_i\rho_S)
\end{equation}
\end{theorem}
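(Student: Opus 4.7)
The plan is to construct $U_{PS}$ explicitly via a Kraus-type decomposition of the POVM and then invoke the standard fact that any isometry extends to a unitary on a larger space. First, I would write each POVM element as $\Gamma_i = M_i^\dagger M_i$ — for concreteness, by taking the Hermitian square root $M_i = \sqrt{\Gamma_i}$ — so that POVM completeness $\sum_i \Gamma_i = I_S$ becomes the isometry relation $\sum_i M_i^\dagger M_i = I_S$.

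Next, I would take $\mathcal{H}_P$ to have dimension $|O|$ with orthonormal basis $\{\ket{i}_P\}_{i \in O}$ and define the linear map $V : \mathcal{H}_S \to \mathcal{H}_P \otimes \mathcal{H}_S$ by $V\ket{\psi}_S = \sum_i \ket{i}_P \otimes M_i \ket{\psi}_S$. A one-line check using orthonormality of $\{\ket{i}_P\}$ together with completeness shows $V^\dagger V = I_S$, so $V$ is an isometry into the (generally strictly larger) space $\mathcal{H}_P \otimes \mathcal{H}_S$. I would then identify $\mathcal{H}_S$ with the subspace $\ket{0}_P \otimes \mathcal{H}_S$ for a distinguished basis element $\ket{0}_P$ and set $U_{PS}|_{\ket{0}_P \otimes \mathcal{H}_S} = V$; on the orthogonal complement the action of $U_{PS}$ can be defined by any unitary completion, since the complement of $\ket{0}_P \otimes \mathcal{H}_S$ in $\mathcal{H}_P \otimes \mathcal{H}_S$ and the complement of $V(\mathcal{H}_S)$ both have dimension $(|O|-1)\dim\mathcal{H}_S$. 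A direct computation using $(\ketbra{i}{i}_P \otimes I_S) U_{PS}(\ket{0}_P \otimes \ket{\psi}_S) = \ket{i}_P \otimes M_i\ket{\psi}_S$ then yields $\text{Tr}((\ketbra{i}{i}_P \otimes I_S) U_{PS}(\ketbra{0}{0}_P \otimes \rho_S) U_{PS}^\dagger) = \text{Tr}(M_i^\dagger M_i \rho_S) = \text{Tr}(\Gamma_i \rho_S)$, which is the required Naimark identity.

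The main subtlety worth flagging is purely notational: the displayed formula in the theorem uses $\ketbra{i}{i}_P$ on \emph{both} sides of $U_{PS}$ inside the trace, whereas the construction above designates a fixed initial probe state $\ket{0}_P$ (matching the convention used for the binary POVM around Eq.~(2) of the main text). This is reconciled by treating $\ket{0}_P$ as a distinguished element of the orthonormal basis $\{\ket{i}_P\}_{i \in O}$ — the initial probe state is a free design choice that one fixes before measurement — so the isometry-extension construction applies verbatim. The only non-routine step in the whole argument is the extension of $V$ to a unitary, and even that is a textbook consequence of the fact that a partial isometry between Hilbert spaces with codomain at least as large as the domain can always be completed on the orthogonal complement.
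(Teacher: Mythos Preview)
Your proof is correct and is the standard isometry-extension construction of the Naimark dilation. However, the paper does not actually prove this theorem: it states it as a known result (citing Wilde, \texttt{wilde2013sequential}) and then immediately specializes to the binary case by writing down the explicit unitary
\[
U_{PS} = \mathbb{I}_P\otimes(\sqrt{\Gamma})_S + i(\sigma_Y)_P\otimes(\sqrt{\mathbb{I}-\Gamma})_S,
\]
without deriving it from a general argument. So there is no ``paper's own proof'' to compare against---your construction is strictly more than what the paper supplies, and in fact your choice $M_i=\sqrt{\Gamma_i}$ together with the isometry $V\ket{\psi}=\sum_i\ket{i}\otimes M_i\ket{\psi}$ is precisely what produces the displayed two-outcome unitary when $|O|=2$ and the completion on the orthogonal complement is chosen appropriately.

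Your flag on the notation is well taken: the theorem as printed has the probe prepared in $\ket{i}$ and measured in $\ket{i}$, whereas both your construction and the paper's own usage (main text Eq.~(2) and the appendix formulas for $p_i$) fix a single initial probe state $\ket{0}$ and vary only the measurement outcome. One can check directly that the explicit binary $U_{PS}$ above does \emph{not} satisfy the identity with $\ketbra{1}{1}$ on both sides yielding $\text{Tr}((\mathbb{I}-\Gamma)\rho)$, so the displayed formula is indeed a typo for the fixed-initial-state version. Your reconciliation---treating the initial probe label as a distinguished but fixed element of the basis---is the right reading.
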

As pointed out in \cite{wilde2013sequential}, the two-outcome POVM $\{\Gamma,\mathbb{I}-\Gamma\}$ can be encoded in the following unitary with the probe system being a qubit.
\begin{equation}\label{eq:naimark2}
    U_{PS} = \mathbb{I}_P\otimes(\sqrt{\Gamma})_S + i(\sigma_Y)_P\otimes(\sqrt{\mathbb{I} - \Gamma})_S
\end{equation}
Let us define a parameterized unitary $V(\phi)$ which acts over both the probe and the system. We define the following quantity as an estimate of trace distance,
\begin{gather}
    T_{\pmb{\phi}}(\rho_0,\rho_1) = |p_0 - p_1|\\
    p_i = \text{Tr}\left((\ketbra{0}{0}\otimes I)V(\pmb{\phi})(\ketbra{0}{0}\otimes \rho_i)V(\pmb{\phi})^\dagger\right)\quad i\in\{0,1\}
\end{gather}
On combining theorem 4.1 and equation \eqref{eq:varTr}, we get that $\forall{\pmb{\phi}}(T_{\pmb{\phi}}(\rho_0,\rho_1) \leq T(\rho_0,\rho_1))$. This is the main crux of using a variational algorithm for estimating trace distance in \cite{chen2021variational}. As an extension to this, we define the following cost function for our algorithm where we use an additional qubit as the probe subsystem.
\begin{gather}
    T^{\mathrm{ve}}_{\pmb{\phi}}(\pmb{\theta}) = |p_0(\pmb{\theta},\pmb{\phi}) - p_1(\pmb{\theta},\pmb{\phi})|\\
    p_i(\pmb{\theta},\pmb{\phi}) = \text{Tr}\left((\ketbra{0}{0}\otimes I)V(\pmb{\phi})(\ketbra{0}{0}\otimes \mathcal{E}_i(\rho(\pmb\theta)))V(\pmb{\phi})^\dagger\right)\quad i\in\{0,1\}
\end{gather}
Our optimization procedure will have to optimize both $\pmb{\theta}$ and $\pmb{\phi}$ for obtaining the best possible state preparation and measurement.
Let us define $\pmb{\theta}_0$ and $\pmb{\phi}_0$ as follows
\begin{align}
    \pmb{\theta}_0 &= \argmax_{\pmb{\theta}}(\|\mathcal{E}_1(\rho_{\pmb{\theta}}) - \mathcal{E}_0(\rho_{\pmb{\theta}})\|_1)\\
    \pmb{\phi}_0 &= \argmax_{\pmb{\phi}}(T^{\mathrm{ve}}_{\pmb{\phi}}(\pmb{\theta}_0))
\end{align}
Here $\pmb{\theta}_0$ optimizes toward the state that saturates the Holevo-Helstrom bound \cite{holevo1973bounds,Helstrom1969}. As per the definition of the optimization problem of trace distance estimation \cite{chen2021variational}, $\pmb{\phi}_0$ represents the best parameters to estimate the trace distance for $\rho_{\pmb{\theta}_0}$.
Now let us define the parameters obtained by a complete optimization as follows
\begin{equation}
    \tilde{\pmb{\theta}}, \tilde{\pmb{\phi}} = \argmax_{\pmb{\theta},\pmb{\phi}}(\text{Tr}_{\pmb{\phi}}(\mathcal{E}_1(\rho_{\pmb{\theta}})) - \text{Tr}_{\pmb{\phi}}(\mathcal{E}_0(\rho_{\pmb{\theta}})))
\end{equation}
Clearly $T^{\mathrm{ve}}_{\pmb{\phi}}(\pmb{\theta}) \leq \|\mathcal{E}_1(\rho_{\pmb{\theta}}) - \mathcal{E}_0(\rho_{\pmb{\theta}})\|_1/2$. Our task now would be to verify if $\tilde{\pmb{\theta}}, \tilde{\pmb{\phi}} \equiv \pmb{\theta}_0, \pmb{\phi}_0$, to see whether the global optimization reaches a meaningful result.
\begin{claim}
Under the assumption that for all $\pmb{\theta}$, there exists a $\pmb{\phi}$ which satisfies \[T^{\mathrm{ve}}_{\pmb{\phi}}(\pmb{\theta}) = T(\rho_{1,\pmb{\theta}},\rho_{0,\pmb{\theta}}),\]
we can claim the equivalence $\tilde{\pmb{\theta}}, \tilde{\pmb{\phi}} \equiv \pmb{\theta}_0, \pmb{\phi}_0$.
\end{claim}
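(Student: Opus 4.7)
The plan is to prove the claim by reducing the joint maximization over $(\pmb{\theta},\pmb{\phi})$ to a nested maximization and then invoking the stated assumption. First I would establish the universal upper bound: for every $\pmb{\theta}$ and every $\pmb{\phi}$, we have $T^{\mathrm{ve}}_{\pmb{\phi}}(\pmb{\theta}) \leq T(\rho_{1,\pmb{\theta}},\rho_{0,\pmb{\theta}})$. This is just Eq.~\eqref{eq:TDest} together with the variational characterization of trace distance recalled earlier in the appendix (namely $T(\rho_0,\rho_1)=\sup_{P\leq \mathbb{I}}\operatorname{Tr}(P(\rho_0-\rho_1))$), since the Naimark-embedded $V(\pmb{\phi})$ always realizes a legitimate two-outcome POVM whose effect operator plays the role of $P$.

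Next I would use the assumption of the claim to upgrade this inequality to an equality under the inner maximization. Concretely, the assumption says that for every fixed $\pmb{\theta}$ there is some $\pmb{\phi}^\star(\pmb{\theta})$ achieving $T^{\mathrm{ve}}_{\pmb{\phi}^\star(\pmb{\theta})}(\pmb{\theta}) = T(\rho_{1,\pmb{\theta}},\rho_{0,\pmb{\theta}})$. Combined with the upper bound, this gives
\begin{equation}
\max_{\pmb{\phi}} T^{\mathrm{ve}}_{\pmb{\phi}}(\pmb{\theta}) \;=\; T(\rho_{1,\pmb{\theta}},\rho_{0,\pmb{\theta}})
\end{equation}
for every $\pmb{\theta}$. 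Substituting into the joint optimization and swapping $\max_{\pmb{\theta},\pmb{\phi}} = \max_{\pmb{\theta}}\max_{\pmb{\phi}}$, I would obtain
\begin{equation}
\max_{\pmb{\theta},\pmb{\phi}} T^{\mathrm{ve}}_{\pmb{\phi}}(\pmb{\theta}) \;=\; \max_{\pmb{\theta}} T(\rho_{1,\pmb{\theta}},\rho_{0,\pmb{\theta}}).
\end{equation}
By definition of $\pmb{\theta}_0$, the right-hand side is attained exactly at $\pmb{\theta}=\pmb{\theta}_0$, so any maximizer $\tilde{\pmb{\theta}}$ of the joint problem must attain the same outer maximum, i.e.\ produce the same state-level trace distance as $\pmb{\theta}_0$. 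Given such a $\tilde{\pmb{\theta}}$, the corresponding $\tilde{\pmb{\phi}}$ must saturate the inner inequality, which by the definition of $\pmb{\phi}_0$ at $\pmb{\theta}_0$ means $\tilde{\pmb{\phi}}$ performs the optimal Naimark-extended measurement on $\rho_{\tilde{\pmb{\theta}}}$, matching $\pmb{\phi}_0$ in the sense of realizing the Helstrom POVM.

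I expect the main subtlety, rather than a serious obstacle, to be the interpretation of the symbol $\equiv$: the argmaxes are typically non-unique (different parameterizations may produce equivalent states, and POVMs giving identical statistics may be parameterized differently), so the sharpest rigorous statement is equivalence as optimal strategies — they produce the same probe state up to local unitaries on the idler and the same effective two-outcome POVM — rather than literal equality of parameter vectors. A clean way to phrase this is to define the equivalence classes $[\pmb{\theta}]=\{\pmb{\theta}':\rho_{\pmb{\theta}'}=\rho_{\pmb{\theta}}\}$ and analogously for $\pmb{\phi}$ modulo the induced POVM, and then the two-line nested-maximization argument above gives $[\tilde{\pmb{\theta}}]=[\pmb{\theta}_0]$ and $[\tilde{\pmb{\phi}}]=[\pmb{\phi}_0]$, which is the claim's intended content.
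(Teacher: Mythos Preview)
Your proposal is correct and follows essentially the same logic as the paper's proof: both use the upper bound $T^{\mathrm{ve}}_{\pmb{\phi}}(\pmb{\theta})\le T(\rho_{1,\pmb{\theta}},\rho_{0,\pmb{\theta}})$ together with the assumption to show that the joint maximum of $T^{\mathrm{ve}}$ equals the maximum of the true trace distance, and hence that the argmaxes coincide. Your nested-maximization packaging is a tidier version of the paper's contradiction-style argument, and your explicit remark that ``$\equiv$'' must be read as equivalence of induced probe states and POVMs (rather than literal parameter equality) is a welcome clarification that the paper leaves implicit.
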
 
\begin{proof}
From the assumption we have taken, it is quite clear that $T^{\mathrm{ve}}_{\pmb{\phi}_0}(\pmb{\theta}_0) = \|\mathcal{E}_1(\rho_{\pmb{\theta}_0}) - \mathcal{E}_0(\rho_{\pmb{\theta}_0})\|_1$. Along with this, since the parameters $\tilde{\theta},\tilde{\phi}$ are from an optimization of $T^{\mathrm{ve}}_{\pmb{\phi}}(\pmb{\theta})$, we must have the following inequality hold
\[T^{\mathrm{ve}}_{\tilde{\pmb{\phi}}}(\tilde{\pmb{\theta}}) \geq T^{\mathrm{ve}}_{\pmb{\phi}_0}(\pmb{\theta}_0)\]
Now from the assumption that we have taken we can make the following claim
\[T^{\mathrm{ve}}_{\tilde{\pmb{\phi}}}(\tilde{\pmb{\theta}}) = \|\mathcal{E}_1(\rho_{\tilde{\pmb{\theta}}}) - \mathcal{E}_0(\rho_{\tilde{\pmb{\theta}}})\|_1\]
If this doesn't hold, there will exist some $\pmb{\phi}$ which gives the exact trace distance and the $TD$ function always is less than the trace bound hence resulting in a contradiction. Hence we have the following hold
\[\|\rho_{1,\tilde{\pmb{\theta}}} - \rho_{0,\tilde{\pmb{\theta}}}\|_1 \geq \|\rho_{1,\pmb{\theta}_0} - \rho_{0,\pmb{\theta}_0}\|_1\]
Let us assume that $\|\rho_{1,\tilde{\pmb{\theta}}} - \rho_{0,\tilde{\pmb{\theta}}}\|_1 > \rho_{1,\pmb{\theta}_0} - \rho_{0,\pmb{\theta}_0}\|_1$. This would contradict the fact that $\pmb{\theta}_0$ is a parameter that saturates the trace distance. Hence we finally get the following hold.
\[\|\rho_{1,\tilde{\pmb{\theta}}} - \rho_{0,\tilde{\pmb{\theta}}}\|_1 = \|\rho_{1,\pmb{\theta}_0} - \rho_{0,\pmb{\theta}_0}\|_1\]
Hence both these parameters saturate the Holevo bound and also they have the perfect trace distance estimators, hence proving their equivalence.
\end{proof}
While the assumption in the above claim requires $V(\pmb{\phi})$ to be able to reach the optimal POVM's Naimark extension for all $\rho_{\pmb{\theta}}$, this does show that the optimization procedure is sound and produces meaningful results. In essence, we will have to optimize our estimated trace distance since the real trace distance is not as easy to calculate but this optimization will end up optimizing the true trace distance as well as the estimate of trace distance toward the true value. This has been reflected in the results we show for variational quantum illumination using Gaussian states.
\begin{figure}
    \centering
    \includegraphics[width=0.95\textwidth]{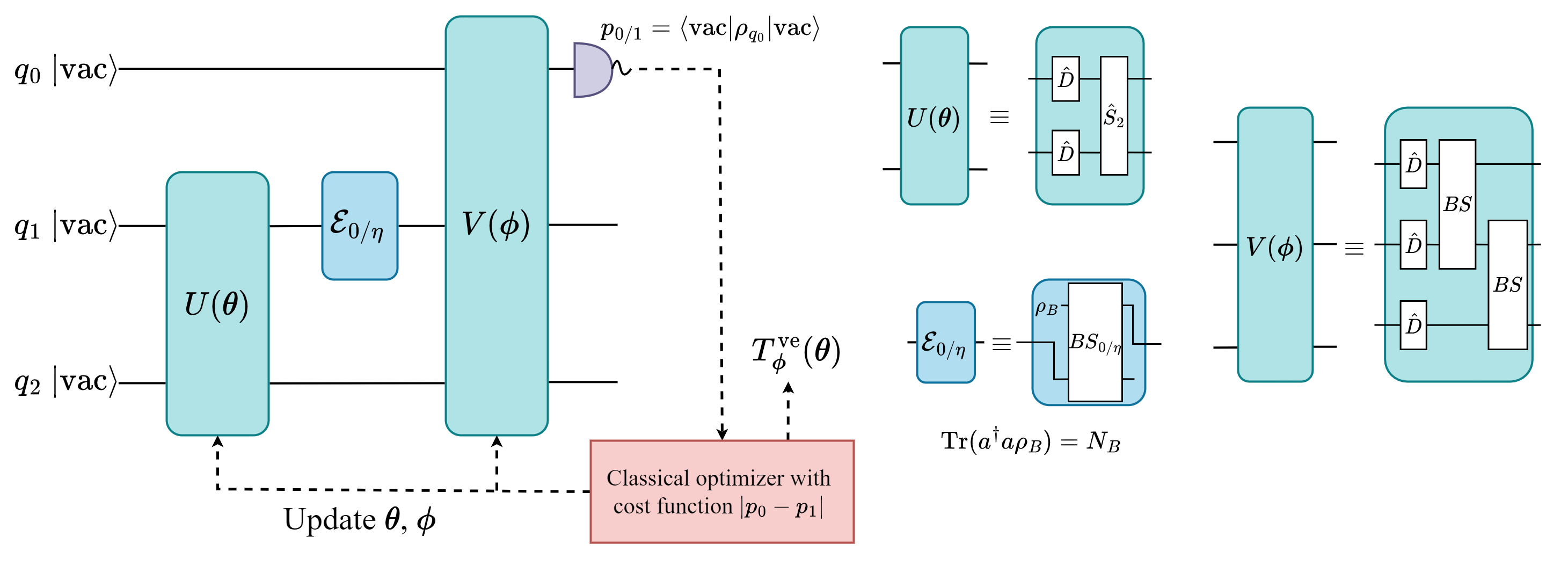}
    \caption{Description of the quantum circuit for variational quantum illumination. The state is prepared over the signal and idler mode using an ansatz which consists of displacements and two-mode squeezing. The ansatz shown for the measurement section consists of displacements and beam splitters. The simulations contain controlled phase gates for in the measurement section as well. The readout can be either measure the probability it is vacuum or the expectation value of the parity operator.}
    \label{fig:VQIcircuit}
\end{figure}
\section{Naimark dilation for Gaussian states}
When we are working with purely Gaussian states, we have the limitation that the unitary $V$ must also be a Gaussian unitary. This puts a fundamental limitation on the form it can take, given that no two Gaussian states are orthogonal \cite{ferraro2005gaussian}. The overlap can be made as small as needed, but true orthogonality is impossible and hence we cannot reach the true canonical Naimark extension \cite{paris2012modern} if we choose to use only Gaussian operations. The isometry of the canonical Naimark extension in this case is given as follows which is not Gaussian.
\begin{equation}
    V_\mathrm{canonical} = \mathbb{I}_{q_0}\otimes(\sqrt{\Gamma})_{SI} + i(-\ketbra{0}{1}_{q_0} + \ketbra{1}{0}_{q_0} + \mathbb{I} - \ketbra{0}{0} - \ketbra{1}{1})\otimes(\sqrt{\mathbb{I} - \Gamma})_{SI}
\end{equation}
On the other hand, we can frame this as trying to perform a measurement on some $n$ mode state by entangling it to a 1 mode system. This can be written as follows.
\begin{equation}
    \text{Tr}(V^\dagger(P_0\otimes\mathbb{I}_n)V(\ketbra{0}{0}\otimes\rho)) = \text{Tr}(P\rho)
\end{equation}
Here $P_0$ is a a projection which is in $L(\mathcal{H})$ where $\mathcal{H}$ is the Hilbert space of a single mode of light. $P$ is a projection in $L(\mathcal{H}^{\otimes n})$ and $\rho$ is a density operator in $D(\mathcal{H}^{\otimes n})$ and $\mathbb{I}_n$ is the identity operator in $L(\mathcal{H}^{\otimes n})$. Since we want the above equation to hold for all $\rho$, we can rewrite it as follows.
\begin{equation}
    V^\dagger (P_0\otimes\mathbb{I}_n) V = \mathbb{I}_1\otimes P
\end{equation}
Here we are applying a transformation from one linear operator to another which means that as long as the norm of both $P_0\otimes\mathbb{I}_n$ and $\mathbb{I}_1\otimes P$ are equal, we can find a $V$. We can construct a $V = (\mathrm{SWAP}_{1,2}\otimes\mathbb{I}_{n-1})(\mathbb{I}_1\otimes V')$ where it performs a swap between the first two modes and then does some unitary $V'$ only on the subsystem of $n$ modes. This transforms the measurement from the space of the first mode to the space of the $n$ modes.
\begin{equation}
    (\mathbb{I}_1\otimes V')^\dagger(\mathrm{SWAP}_{1,2}\otimes\mathbb{I}_{n-1})^\dagger(P_0\otimes\mathbb{I}_n)(\mathrm{SWAP}_{1,2}\otimes\mathbb{I}_{n-1})(\mathbb{I}_1\otimes V') = \mathbb{I}_1\otimes V'^\dagger(P_0\otimes\mathbb{I}_{n-1})V'^\dagger
\end{equation}
This shows that we can construct any projection of form $P = V'^\dagger(P_0\otimes\mathbb{I}_{n-1})V'^\dagger$ where $V'$ is Gaussian since the swap operation between two modes can be trivially represented as a passive Gaussian operation. This recipe shows us that while we may not be able to construct the canonical Naimark extension of the optimal POVM, we can construct a Naimark extension that performs a POVM on the $n$ mode subspace using a single-mode ancillary measurement.
\section{Multiple optima in Gaussian quantum illumination}
As can be seen in Fig. 2 of the main text, there are certain states which do not have 100\% fidelity with the TMSV yet happen to have equal performance in the QFI, chernoff bound and the trace distance. These states are largely accessible due to the constraint only being placed on the value of signal photon number $N_S$.\\
We pick the example of the state obtained in the case of $N_B = 1$. We perform a Schmidt decomposition on this state and compare it with the TMSV. Equal Schmidt values imply that one state can be obtained from the other using only local transformations implying equal entanglement as well. It turns out that this is not the case and the TMSV is different from the obtained optimal state despite both being equally good for the hypothesis testing task. This implies that there are clearly multiple possible Gaussian states which are optimal for the hypothesis testing task.

The five largest Schmidt values for TMSV with $N_S = 0.1$ are $6.20921323\times 10^{-5}, 6.83013455\times 10^{-4}, 7.51314801\times 10^{-3}, 8.26446281\times 10^{-2}, 9.09090909\times 10^{-1}$. The TMSV has a von-Neumann entropy of $0.33509970612111517$.\\
The five largest Schmidt values for the obtained optimal state at $N_B = 1$ with $N_S = 0.1$ are $6.03874183\times 10^{-5}, 6.69019663\times 10^{-4}, 7.41192949\times 10^{-3}, 8.21152229\times 10^{-2}, 9.09737450\times 10^{-1}$. This state has a von-Neumann entropy of $0.3332223308457541$.\\
\section{Testing unknown values in variational quantum illumination}
\begin{figure}[ht]
    \centering
    \includegraphics[width=0.9\textwidth]{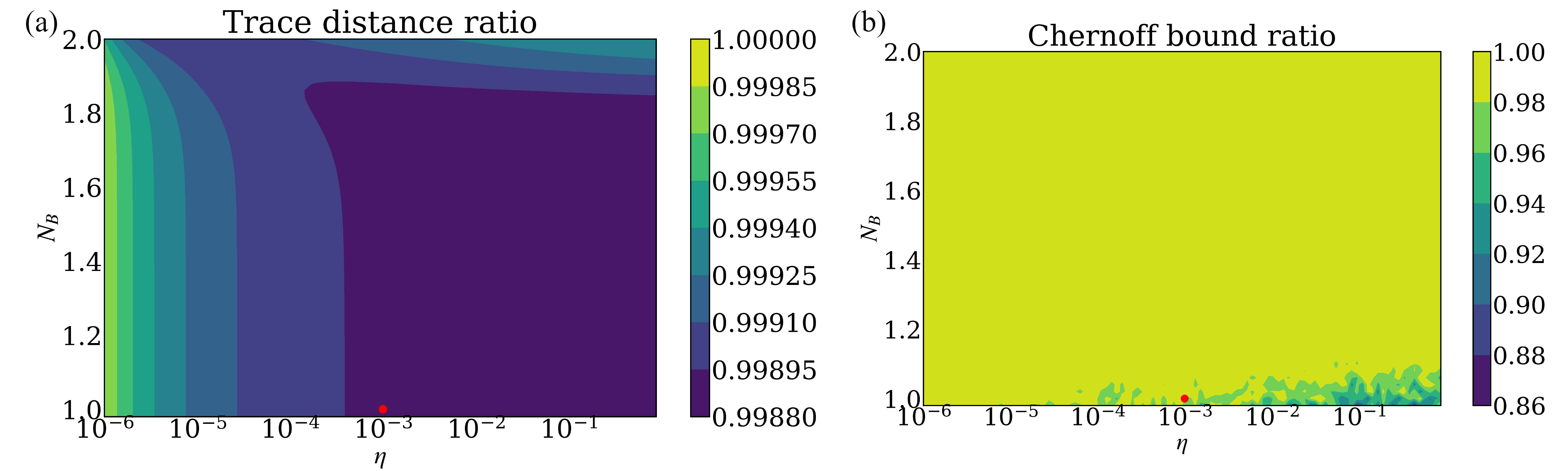}
    \caption{Comparision of performance of a specific optimized state over a range of $N_B$ and $\eta$. The state we choose is obtained by optimizing with $N_B=1$ and $\eta=10^{-3}$ (indicated by a red dot in the image) which as shown previously is quite different from the TMSV. (a) Shows the ratio of trace distance when using the chosen state to that of the trace distance with TMSV (b) Shows the ratio of Chernoff bound when using the chosen state to that of the trace Chernoff bound with TMSV. Both of these are measures of performance for a good hypothesis test in the symmetric case giving an idea that even over unknown values of $N_B$ and $\eta$ we obtain states which do perform optimally.}
    \label{fig:performanceratio}
\end{figure}
\indent The optimization protocol we use happens to find the optimal state for discriminating two given maps $\mathcal{E}_0$ and $\mathcal{E}_1$. However in the case of quantum illumination, where these maps are dependent on some continuous parameters (background radiation $N_B$ and reflectivity $\eta$), one might want to check the applicability of the optimized state for varying values of $N_B$ and $\eta$. This is checked by seeing how well a fixed state performs compared to the known optimal TMSV state as can be seen in Fig. \ref{fig:performanceratio}.

\section{Simulation code and data}
All the simulation code and data for variational Gaussian quantum illumination, general QHT, and multiple hypothesis testing can be found here: \url{https://github.com/mahadevans2432/Variational-QHT}.

\end{document}